\newtheorem{thm}{Theorem}[section] 
\theoremstyle{definition} 
\newtheorem{dfn}{Definition}[section] 
\theoremstyle{axiom}
\theoremstyle{remark} 
\theoremstyle{plain} 
\newtheorem{lem}[thm]{Lemma}
\theoremstyle{plain} 
\begin{document}

\title{A "network of networks" (from history to algebra)}
\author{Daniel Parrochia}
\date{University of Lyon (France)}
\maketitle

\textbf{Abstract.}

Recall first the algebraic treatment of flows or tensions in a transportation network $N$, i.e. a connected antisymmetric 1-graph $G(X, U)$. Assume that, unusually,  we take the values of flows (resp.  tensions) in $\mathbb{C}$. So the algebraic lattices $\Gamma$ of flow (resp. tension) values associated to $G(X, U)$ are lattices of $\mathbb{C}$. These lattices are congruent modulo the action of the special linear group SL($2, \mathbb{C}$).  Then, it is well known one can define a lattice function $G_{k}(\Gamma)$, as a modular function of weight $2k$, on the set $\mathcal{R}$ of all lattices of $\mathbb{C}$. Let now $N_{1}, N_{2}, ..., N_{p}$  be connected antisymmetric 1-graphs and $C_{n}$, the set of hermitian symmetric matrices $n \times n$. Let also $\mathcal{R'} $ be  the set of all the lattices of $C_{n}$. The previous structure can be transposed to any $ n \times n $ symmetric hermitian matrices of flow (or tension) values of the $G_{i}$. In this case, the Siegel space $S_{n}= C_{n}$ replaces the Poincar\'{e} half-plane, and the symplectic group Sp$(2n, \mathbb{R})$ takes the place of the special linear group SL($2, \mathbb{C}$). We get now the new lattice function as a function of all the lattices of $S_{n}$, i.e. a model of the "network of networks" $\mathcal{R'}$. In the end, we study the tree of minimal length of $\mathcal{R'}$.\\

\textbf{Key words.}
network, flows, tensions, lattices, lattice functions, modular functions, network algebra.

\section{Introduction}
Consider a connected 1-graph $G$ whose arcs are denoted by 1, 2, ..., $m$ and let some quantities $b_{i}, c_{i}$ be such that 
\[
-\infty \le b_{i} \le c_{i} \le +\infty
\]
with the conditions:

1) $b_{i} =  0 \qquad (i = 1, 2, ..., m)$;

2) $c_{i} \ge 0 \qquad$ for all $i$, and $c_{i} = +\infty$;

3) Arc $i$ = 1 is the arc $(b, a)$ which connects a point $b$ named the {\it output} with a point $a$ named the {\it input}, these two points verifying:
\[
\omega^-(a) = (1, 0, 0, ..., 0),
\]
\[
\omega^+(a) = (1, 0, 0, ..., 0);
\]

4) $G$ is an antisymmetric 1-graph.

The arc 1 = ($b, a$), that will not be drawn, is named the {\it return arc} and is just introduced to maintain the Kirchoff law at the vertices $a$ and $b$.\\

\begin{dfn}
A graph $G$, with a capacity $c_{i}$ associated to any arc $i$, and which satisfies all these conditions, is called a {\it transportation network} (see \cite{For})\footnote{Historically, before Ford and Fulkerson, it seems that interest for combinatorial optimization may be found in an article of A. N. Tolsto\u{\i} from 1930, in which the transportation problem is studied, as well as an, until recently secret, RAND report of T. E. Harris and F. S. Ross from 1955, that Ford and Fulkerson mention as motivation to study the maximum flow problem. These papers have in common that they both apply their methods to the Soviet railway network. As Schrijver recalled, the transportation problem was formulated by (\cite{Hit}, and a cycle criterion for optimality was considered by \cite{Kan1}, \cite{Kan2}, \cite{Koo1}, \cite{Koo2}, \cite{Rob1}, \cite{Rob2} \cite{Gal1}, \cite{Gal2}, \cite{Lur}, \cite{Ful} and \cite{Kle}. On all that, see \cite{Sch}.}
and it will be denoted by:
\[
N = (X, U, c(u)).
\]
\end{dfn}

In the following, as we will not pay attention to capacities, the previous network will be reduced to a connected 1-graph $G$.

\section{Flows and tensions in networks} 

\begin{dfn}
A flow in a connected graph $G$  is usually defined as a vector $\phi = (\phi_{1},  \phi_{2}, ... ,  \phi_{m}) \in \mathbb{Z}^m$ such that:

(1) $\phi_{i} \in \mathbb{Z}$ for $i =1, 2, ... , m$. (The integer $\phi_{i}$ is called an {\it arc flow} and may be regarded as the number of vehicules (signals, etc.) travelling through arc $i$ along its direction if $\phi_{i} \ge 0$ or against its direction if $\phi_{i} < 0$.)

(2) For each vertex $x$, the sum of the arc flows entering $x$ equals the sum of the arc flows leaving $x$ (Kirchoff law), i.e., 
\[
\sum_{i \in \omega^-(x)} \phi_{i} = \sum_{j \in \omega^+(x)} \phi_{j}  \qquad (x \in X).
\]
\end{dfn}

According to Berge (see \cite{Ber}, 85),  it is possible to develop an algebraic study of flows in such a graph.

Firstable, as $\mathbb{Z}^m$ is a module on $\mathbb{Z}$ (not a vector space, because $\mathbb{Z}$ is not a field), the set $\Phi$ of all flows in the graph $G$ constitutes a submodule of $\mathbb{Z}^m$, i.e we have:
\[
\phi^1, \phi^2 \in \Phi \Rightarrow \phi^1 + \phi^2 \in \Phi,
\]
\[
s \in \mathbb{Z}, \phi \in \Phi \Rightarrow s\phi \in \Phi.
\]

Berge proves the following theorem:\\

\begin{thm}
Let $G = (X, U)$ a connected graph; $H = (X, V)$ an arbitrary tree of $G$; 1, 2, ..., $k$, the arcs of $U-V$; $\mu^1, \mu^2, ..., \mu^k$ the cycles associated with $H$. A flow $\phi$ is uniquely defined by its values $\phi_{1}, \phi_{2}, ..., \phi_{k} \in U-V$ by:
\[
\phi = \phi_{1}\vec{\mu}^1 + \phi_{2}\vec{\mu}^2+ ... + \phi_{k}\vec{\mu}^k,
\]
where the $\phi_{i}$ are scalars and the $\vec{\mu}^i$ are vectors associated with independent elementary cycles.
\end{thm}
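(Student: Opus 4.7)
The plan is to show that $\{\vec{\mu}^1,\ldots,\vec{\mu}^k\}$ is a basis of the $\mathbb{Z}$-module $\Phi$ of flows on $G$, with the additional property that the coordinates of a flow in this basis are exactly its values on the chords $U-V$. I would carry this out in three steps: (i) verify that each $\vec{\mu}^i$ is itself a flow; (ii) read coordinates on chords to obtain linear independence and to identify the coefficients; (iii) reduce an arbitrary flow, by subtraction, to a flow supported on the spanning tree $H$, and show that such a flow must vanish.

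For step (i), recall that $\mu^i$ is the elementary cycle obtained by closing the chord $i \in U-V$ through the unique path in $H$ joining its endpoints. Orienting $\vec{\mu}^i$ so that its component along $i$ equals $+1$, Kirchhoff's law holds at every vertex of the cycle because each such vertex has exactly one incoming and one outgoing cycle-arc, so the signed sum is zero, while vertices outside the cycle are trivially balanced. Hence $\vec{\mu}^i \in \Phi$, and by the submodule property any $\mathbb{Z}$-linear combination of the $\vec{\mu}^i$ is again a flow. The crucial combinatorial observation for step (ii) is that chord $j \in U-V$ belongs to the cycle $\mu^i$ only when $i=j$, since $\mu^i$ uses exactly one chord (namely $i$) together with tree arcs. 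Thus $\vec{\mu}^i$ has component $+1$ on arc $i$ and $0$ on every other chord, and reading the $j$-th coordinate of any identity $\sum_i \alpha_i\vec{\mu}^i = 0$ forces $\alpha_j = 0$. For existence, given $\phi \in \Phi$ I would set
\[
\psi := \phi - \sum_{i \in U-V} \phi_i\,\vec{\mu}^i,
\]
which is a flow whose components on every chord vanish, so $\psi$ is supported on $H$.

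Step (iii), the main step, is to show that a flow supported on a tree is identically zero. I would argue by induction on the number of arcs of $H$: pick any leaf $x$ of $H$, incident to a unique arc $v$; Kirchhoff's law at $x$ forces $\psi_v=0$; delete $x$ and $v$ and iterate on the remaining subtree. When the recursion terminates, every component of $\psi$ is zero, hence $\phi = \sum_{i \in U-V} \phi_i\vec{\mu}^i$, and uniqueness of the coefficients follows from step (ii). The only delicate point I anticipate is the careful bookkeeping of orientations — the assertion that the $i$-th coordinate of $\vec{\mu}^i$ is $+1$ and that all other chord coordinates vanish depends on consistently orienting each elementary cycle so as to traverse its defining chord in the positive direction. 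Once this convention is fixed, the rest of the argument is a routine application of Kirchhoff's law together with the leaf-peeling induction on the tree $H$.
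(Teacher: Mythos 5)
Your proof is correct. The paper itself gives no proof of this statement --- it simply cites Berge --- and your argument (the fundamental cycles $\vec{\mu}^i$ are flows, their chord coordinates form an identity matrix so the coefficients are forced to be the $\phi_i$, and the residual flow $\psi$ supported on the tree $H$ vanishes by leaf-peeling with Kirchhoff's law) is precisely the classical argument found in Berge's treatment, with the orientation convention handled appropriately.
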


 This means that a flow is uniquely defined by its components on a cotree of $G$.

Let now come to tensions.\\

\begin{dfn}
A tension (or potential difference) in a connected graph $G$ is defined to be a vector $\theta = (\theta_{1}, \theta_{2}, ... , \theta_{m}) \in \mathbb{Z}^m$ such that, for each elementary cycle $\mu$, 
\[
\sum_{i \in \mu^+} \theta_{i} = \sum_{i \in \mu^-} \theta_{i}.
\]
\end{dfn}
For every arc $i$, we have: $\theta_{i}= t$ (terminal end of arc $i$) - $t$ (initial end of arc $i$).

Let $\Theta$ denote the set of all tensions. Note that $\Theta$ is also a submodule of $\mathbb{Z}^m$, i.e.,
\[
\theta^1, \theta^2 \in \Theta \Rightarrow \theta^1 + \theta^2 \in \Theta,
\]
\[
s \in \mathbb{Z}, \theta \in \Theta \Rightarrow s\theta \in \Theta.
\]

Here again, Berge proves the following theorem:\\

\begin{thm}
Let $G = (X, U)$ a connected graph; $H = (X, V)$ an arbitrary tree of $G$; 1, 2, ..., $k$, the arcs of this tree; $\vec{\omega}^1, \vec{\omega}^2, ..., \vec{\omega}^\ell$ the cocycles associated with $H$. A tension $\theta$ is uniquely defined by its values $\theta_{1}, \theta_{2}, ..., \theta_{\ell}$ on the arcs of the tree by:
\[
\theta = \theta_{1}\vec{\omega}^1 + \theta_{2}\vec{\omega}^2+ ... + \theta_{\ell}\vec{\omega}^\ell,
\]
where the $\theta_{i}$ are scalars and the $\vec{\omega}^i$ are vectors associated with independent elementary cocycles.
\end{thm}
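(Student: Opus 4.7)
The plan is to mirror the strategy used for Theorem 2.1 on flows, exploiting the duality between tensions and cocycles: where flows were written on a basis of elementary cycles associated with the cotree, tensions will be written on a basis of elementary cocycles associated with the tree itself. The main point is thus to show that the $\ell$ cocycles $\vec{\omega}^1,\dots,\vec{\omega}^\ell$ form a $\mathbb{Z}$-basis of $\Theta$ which is dual to the coordinate functions ``value on tree arc $i$''.

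First, I would check that each $\vec{\omega}^k$ itself lies in $\Theta$. Removing tree arc $k$ from $H$ disconnects it into two subtrees with vertex sets $A_k$ and $\bar{A}_k$; defining a potential $t_k(x) = 1$ for $x \in A_k$ and $t_k(x) = 0$ otherwise, the associated tension $\theta_i = t_k(\text{head of }i) - t_k(\text{tail of }i)$ is precisely $\vec{\omega}^k$ up to the sign convention encoded in $\omega^+$ and $\omega^-$. Since $\Theta$ is a submodule of $\mathbb{Z}^m$, every combination $\sum_k \theta_k \vec{\omega}^k$ is then a tension, which yields the existence half of the parametrisation. Moreover $\vec{\omega}^k$ contains tree arc $k$ but no other tree arc (a single cut in a tree crosses exactly one tree edge), so evaluating $\sum_j \theta_j \vec{\omega}^j$ on tree arc $i$ returns $\theta_i$ for $1 \le i \le \ell$; this simultaneously gives linear independence of the $\vec{\omega}^k$ and the desired coordinate formula.

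The main step, and the one I expect to require some care, is uniqueness: namely, showing that an arbitrary tension is already determined by its values on the tree. Given $\theta \in \Theta$, set $\theta' = \theta - \sum_k \theta_k \vec{\omega}^k$; this is again a tension and vanishes on every tree arc. Choose a root $x_0 \in X$, set $t(x_0) = 0$, and extend $t$ to $X$ by integrating $\theta'$ along the unique path in $H$ from $x_0$ to each vertex. Because $\theta'$ vanishes on $V$, the resulting potential $t$ is identically zero. It remains to propagate the vanishing of $\theta'$ from tree arcs to non-tree arcs, which is the only non-routine point: for each $i \in U - V$, applying the cycle-closure property that defines a tension to the elementary cycle that $i$ forms with $H$ (the companion of the $\vec{\mu}^i$ appearing in Theorem 2.1) expresses $\theta'_i$ as a signed sum of $\theta'$-values on tree arcs, hence $\theta'_i = 0$. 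Thus $\theta' = 0$ and $\theta = \sum_k \theta_k \vec{\omega}^k$, completing the argument.
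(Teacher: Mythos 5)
Your argument is correct, but note that the paper itself offers no proof of this theorem: it is quoted from Berge with the remark ``Berge proves the following theorem,'' so there is no in-paper argument to compare against. What you have written is essentially the classical proof one finds in Berge: (i) each elementary cocycle $\vec{\omega}^k$ is the coboundary of the indicator potential of one side of the cut, hence a tension; (ii) the cut determined by tree arc $k$ meets no other tree arc, which gives both the independence of the $\vec{\omega}^k$ and the fact that $\sum_j \theta_j\vec{\omega}^j$ restricted to the tree reproduces the prescribed values; (iii) a tension vanishing on the tree vanishes everywhere, because the cycle condition applied to the elementary cycle that a cotree arc forms with the tree expresses its value as a signed sum of tree values. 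Two small remarks. First, in your uniqueness step the potential-integration argument and the cycle-propagation argument are doing the same job twice: with the paper's definition of a tension (the cycle condition), the propagation to non-tree arcs is all that is needed, and the potential $t$ you build plays no further role; alternatively, if one takes ``tension $=$ coboundary of a potential'' as the definition, the integration argument alone suffices. Second, the count of cocycles is $\ell = |X| - 1$, the number of tree arcs, which is consistent with your indexing but worth stating since the theorem's wording labels the tree arcs $1,\dots,k$ and the cocycles $1,\dots,\ell$ without saying $k=\ell$. Neither point is a gap; the proof stands.
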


 This means that a tension is uniquely defined by its components on a tree of $G$.
 
We can easily see that $\Theta$ and $\Phi$ are two orthogonal submodules of $\mathbb{Z}^m$, which means that, for every elementary cycle $\mu$, we have:
\[
\langle \phi, \theta \rangle =\sum_{i=1}^m \phi_{i}\theta_{i} =  0.
\]

\section{Algebraic lattices}

We propose to extend the previous model. Let us consider now the set of all possible values of tensions or flows in some network $N$. We will prove that this set can be associated to a metanetwork $G_{k}(\Gamma)$ which satisfies good properties. Recall first the following definition: \\

 \begin{dfn}
 A lattice $\Gamma$, in an $\mathbb{R}$-vector space $V$ of finite dimension, is a subgroup of $V$ verifying one of the following equivalent conditions enumerated by Serre (see \cite{Ser}, 133):
 
1) $\Gamma$ is discrete and $V/ \Gamma$ is compact;

2) $\Gamma$ is discrete and generates the $\mathbb{R}$-vector space $V$;

3) There exists an $\mathbb{R}$-basis $\{e_{1}, ... e_{n}  \}$ of $V$, which is a $\mathbb{Z}$-basis of $\Gamma$ and $\Gamma = \mathbb{Z}e_{1} \oplus ... \oplus \mathbb{Z}e_{n}$.

 \end{dfn}
 
Now, let us choose values of flows (or tensions) in an $\mathbb{R}$-vector space $V= \mathbb{R}^n$.\\

 \begin{thm}
 The set of all possible flow (resp. tension) values of the network $N$ is a lattice in $\mathbb{R}^n$.
 \end{thm}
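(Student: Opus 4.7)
The strategy is to verify condition~(3) of Definition~3.1, i.e.\ exhibit an $\mathbb{R}$-basis of the ambient space that is simultaneously a $\mathbb{Z}$-basis of the set of flow (or tension) values, and then conclude directly.

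First, I would invoke Theorem~2.1: fix a spanning tree $H=(X,V)$ of $G$ and consider the elementary cycle vectors $\vec{\mu}^{1},\ldots,\vec{\mu}^{k}$ associated with the cotree arcs $U\setminus V$. Theorem~2.1 asserts that every flow $\phi$ admits one and only one decomposition $\phi=\sum_{i=1}^{k}\phi_{i}\vec{\mu}^{i}$, which shows that $\Phi$ is the free $\mathbb{Z}$-module generated by the $\vec{\mu}^{i}$, so these vectors form a $\mathbb{Z}$-basis of $\Phi$. Next, the same uniqueness, rewritten with real coefficients, gives $\mathbb{R}$-linear independence: along each cotree arc $i$ only $\vec{\mu}^{i}$ carries a nonzero coefficient, so any relation $\sum\lambda_{i}\vec{\mu}^{i}=0$ with $\lambda_{i}\in\mathbb{R}$ forces every $\lambda_{i}=0$. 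Hence $(\vec{\mu}^{1},\ldots,\vec{\mu}^{k})$ is simultaneously a $\mathbb{Z}$-basis of $\Phi$ and an $\mathbb{R}$-basis of the real subspace $W=\mathrm{Span}_{\mathbb{R}}(\vec{\mu}^{1},\ldots,\vec{\mu}^{k})$, which is the natural ambient space in which the flow values are read. Condition~(3) of Definition~3.1 then applies verbatim, giving
\[
\Phi=\mathbb{Z}\vec{\mu}^{1}\oplus\cdots\oplus\mathbb{Z}\vec{\mu}^{k},
\]
a lattice of $W$. The tension case is entirely parallel: replace Theorem~2.1 by Theorem~2.2 and cycle vectors by the cocycle vectors $\vec{\omega}^{1},\ldots,\vec{\omega}^{\ell}$ attached to the tree arcs, and the same three-step argument yields that $\Theta$ is a lattice in $\mathrm{Span}_{\mathbb{R}}(\vec{\omega}^{1},\ldots,\vec{\omega}^{\ell})$.

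The main obstacle I anticipate is not algebraic but interpretative. The cycle (resp.\ cocycle) vectors live naturally in $\mathbb{R}^{m}$ with rank $k=m-|X|+1$ (resp.\ $\ell=|X|-1$), whereas the statement speaks of a lattice in $\mathbb{R}^{n}$. The proof therefore hinges on identifying this $\mathbb{R}^{n}$ with the cycle (resp.\ cocycle) subspace $W$ of $\mathbb{R}^{m}$, i.e.\ taking $n=k$ for flows and $n=\ell$ for tensions; once the ambient vector space is fixed in this way, conditions~(2) and (3) of Definition~3.1 are both immediate consequences of Theorems~2.1--2.2, and discreteness follows because the coordinates relative to the basis $(\vec{\mu}^{i})$ (resp.\ $(\vec{\omega}^{j})$) are integers.
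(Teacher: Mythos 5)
Your proposal is correct and follows essentially the same route as the paper: both arguments identify the flow (resp.\ tension) module with the integer span of the cycle (resp.\ cocycle) basis vectors coming from Theorems~2.1--2.2 and then conclude via condition~(3) of Definition~3.1. Your version is in fact more careful than the paper's, which silently treats the cycle/cocycle vectors as ``a basis of $\mathbb{R}^n$'' without addressing the point you raise explicitly, namely that the ambient space must be taken to be the cycle (resp.\ cocycle) subspace of $\mathbb{R}^m$ of dimension $k=m-|X|+1$ (resp.\ $\ell=|X|-1$) for the lattice conditions to hold.
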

 
 \begin{proof}
 Let $\epsilon = (\epsilon_{1}, \epsilon_{2}, ... , \epsilon_{n})$, a flow (resp. a tension) in some arc(s) of $G$. By definition, $\epsilon$ belongs to $\mathbb{R}^n$, viewed as a vector space on $\mathbb{R}$. Moreover, according to the definition of flows (Def. 2.1) and of  tensions (Def. 2.2), the set $\Gamma$, of all flow  (resp. tension) values in the graph $G$, is the subgroup of all linear combinations with integer coefficients of the basis vectors of $\mathbb{R}^n$ (cycles, resp. cocyles). So it is such that:
 \[
\Gamma =  \mathbb{Z}\epsilon_{1} \oplus ... \oplus \mathbb{Z}\epsilon_{n},
 \]
for any basis of $\mathbb{R}^n$. In other words, it forms a lattice in $\mathbb{R}^n$.
 \end{proof}

\section{The lattices of $\mathbb{C}$}

Assume now that the flow (resp. tension) values of $G$ are in $\mathbb{C}$, and consider only two-valued flows (resp. tensions).

Let us call $\mathcal{R}$ the set of lattices of $\mathbb{C}$, considered as an $\mathbb{R}$-vector space, and let us now choose a pair of flow (resp. tension) values ($\alpha_{1}, \alpha_2) \in \mathbb{C}^*$ so that Im($\alpha_{1}/ \alpha_2) > 0$.  $M$ will be the set of these pairs.

To such a pair ($\alpha_{1}, \alpha_2)$, we associate the lattice:
\[
\Gamma(\alpha_{1}, \alpha_{2}) =  \mathbb{Z}\alpha_{1} \oplus \mathbb{Z}\alpha_{2}.
\]
with basis  $\{\alpha_{1}, \alpha_2\}$.

Thus we get a map $M \rightarrow \mathcal{R}$, which is clearly surjective.

Now let:
\[
g =
\begin{pmatrix}
a & b \\
c & d
\end{pmatrix}
\in  \textnormal{SL}(2, \mathbb{Z})
\]
the special linear group of square matrices $2 \times 2$ with relative coefficients, and let $(\alpha_{1}, \alpha_2) \in M.$
One proves (see \cite{Ser}, 134) the following theorem:\\

\begin{thm}
For two elements of $M$ to define the same lattice, it is  necessary and sufficient that they are congruent modulo \textnormal{SL(}$2, \mathbb{Z}$\textnormal{)}. 
\end{thm}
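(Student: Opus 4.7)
The plan is to prove the two implications of the equivalence separately. For sufficiency, suppose $(\alpha_{1}', \alpha_{2}') = g \cdot (\alpha_{1}, \alpha_{2})$, i.e.\ $\alpha_{1}' = a\alpha_{1} + b\alpha_{2}$ and $\alpha_{2}' = c\alpha_{1} + d\alpha_{2}$, with $g \in \textnormal{SL}(2,\mathbb{Z})$. Since $a,b,c,d \in \mathbb{Z}$, both $\alpha_{1}'$ and $\alpha_{2}'$ lie in $\Gamma(\alpha_{1}, \alpha_{2})$, which gives $\Gamma(\alpha_{1}', \alpha_{2}') \subseteq \Gamma(\alpha_{1}, \alpha_{2})$. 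Because $\det g = 1$, the inverse $g^{-1}$ still has integer entries and belongs to $\textnormal{SL}(2,\mathbb{Z})$, so the reverse inclusion follows by the same reasoning, and the two lattices coincide.

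For necessity, assume $\Gamma(\alpha_{1}, \alpha_{2}) = \Gamma(\alpha_{1}', \alpha_{2}')$. Since $\{\alpha_{1}, \alpha_{2}\}$ is a $\mathbb{Z}$-basis of this common lattice, each $\alpha_{i}'$ is an integer linear combination of $\alpha_{1}, \alpha_{2}$; this defines an integer matrix $g$ with $(\alpha_{1}', \alpha_{2}') = g \cdot (\alpha_{1}, \alpha_{2})$. Running the argument in the opposite direction yields an integer matrix $g'$ with $(\alpha_{1}, \alpha_{2}) = g' \cdot (\alpha_{1}', \alpha_{2}')$, hence $g'g \cdot (\alpha_{1}, \alpha_{2}) = (\alpha_{1}, \alpha_{2})$. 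The $\mathbb{R}$-linear independence of $\alpha_{1}, \alpha_{2}$ in $\mathbb{C}$ — guaranteed by the condition $\textnormal{Im}(\alpha_{1}/\alpha_{2}) \ne 0$ in the definition of $M$ — forces $g'g = I$, so $\det g \in \{\pm 1\}$ and $g \in \textnormal{GL}(2,\mathbb{Z})$.

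The only step requiring a small calculation is ruling out $\det g = -1$. Setting $\tau = \alpha_{1}/\alpha_{2}$ and $\tau' = \alpha_{1}'/\alpha_{2}' = (a\tau+b)/(c\tau+d)$, the standard one-line identity
\[
\textnormal{Im}(\tau') = \frac{\det(g)\,\textnormal{Im}(\tau)}{|c\tau+d|^{2}}
\]
is precisely what I need. Since the definition of $M$ requires both $\textnormal{Im}(\tau) > 0$ and $\textnormal{Im}(\tau') > 0$, the numerator and denominator on the right must have the same sign, forcing $\det g > 0$ and therefore $\det g = +1$. This places $g$ in $\textnormal{SL}(2,\mathbb{Z})$ and completes the necessary direction.

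I do not expect any serious obstacle: the argument is essentially the standard change-of-basis computation for a free abelian group of rank two, and the only subtle point — pinning down the sign of the determinant — reduces to the identity for the imaginary part of a Möbius transform, which is exactly where the half-plane hypothesis on $M$ is used decisively.
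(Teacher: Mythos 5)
Your proposal is correct and follows essentially the same route as the paper's (Serre's) proof: the same change-of-basis substitution for sufficiency, and for necessity the same reduction to a unimodular integer matrix whose determinant sign is pinned down by the identity $\textnormal{Im}(\tau') = \det(g)\,\textnormal{Im}(\tau)/|c\tau+d|^{2}$. You merely fill in details the paper leaves implicit (the double inclusion of lattices, and why the transition matrix has determinant $\pm 1$), which is a welcome but not substantively different elaboration.
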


\begin{proof}(Serre)
The condition is sufficient. Let us put:
\[
\alpha'_{1} = a \alpha_{1} +  b \alpha_{2} \ \textnormal{and} \ \alpha'_{2} = c \alpha_{1} + d \alpha_{2}.
\]

Il is clear that $\{\alpha'_{1}, \alpha'_{2}\}$ is a basis of $\Gamma(\alpha_{1}, \alpha_{2})$. Moreover, if the set $z = 
\alpha_{1}/\alpha_{2}$ and $z' = \{\alpha'_{1}/ \alpha'_{2}\}$, we have:
\[
z' = \frac{az  + b}{cz+d} = gz.
\]
This shows that Im$(z')>0$, hence that $(\alpha'_{1}, \alpha'_{2})$ belongs to $M$.

Conversely, if $(\alpha_{1}, \alpha_{2})$ and $(\alpha'_{1}, \alpha'_{2})$ are two elements of $M$ which define the same lattice, there exists an integer matrix 
\[
g =
\begin{pmatrix}
a & b \\
c & d
\end{pmatrix}
\]
of determinant $\pm 1$ which transforms the first basis into the second. If det($g$) was $<0$, the sign of Im$(\alpha'_{1}/ \alpha'_{2})$ would be the opposite of Im$(\alpha_{1}/ \alpha_{2})$ as one sees by an immediate computation. The two signs being the same, we have necessarily det($g$) = 1, which proves the theorem.
\end{proof}

Hence we can identify the set $\mathcal{R}$ of all the lattices of $\mathbb{C}$ (which are, for us, sets of flow (or tension) values associated to connected 1-graphs (or networks) with the quotient of $M$ by the action of SL($2, \mathbb{Z}$). 

\section{Modular functions}

Let now $F$ be a function on $\mathcal{R}$, with complex values, and let $k \inÊ\mathbb{Z}$. We say (with Serre) that $F$ is of weight $2k$ if:
\begin{equation}
F(\lambda\Gamma) = \lambda^{-2k}F(\Gamma),
\end{equation}
for all lattices $\Gamma$ and all $\lambda \in \mathbb{C}^*$.

Let $F$ be such a function. If $(\alpha_{1}, \alpha_{2}) \in M$, we denote by $F(\alpha_{1}, \alpha_{2})$ the value of $F$ on the lattice $\Gamma(\alpha_{1}, \alpha_{2})$. The formula (1) translates to:

\begin{equation}
F(\lambda\alpha_{1}, \lambda\alpha_{2}) = \lambda^{-2k}F(\alpha_{1}, \alpha_{2}).
\end{equation}                

Writing that $F$ is invariant by SL(2, $\mathbb{Z})$, we can see that it satisfies the identity:

\begin{equation}
F(z) = (cz + d)^{-2k}F(\frac{az+b}{cz+d}),
\end{equation}

for all:
\[
\begin{pmatrix}
a & b \\
c & d
\end{pmatrix}
\in \textnormal{SL(}2,\mathbb{Z}).
\]
Conversely, if $F$ verifies (3), $F$ is a function on $\mathcal{R}$ which is of weight $2k$. We can thus identify {\it modular functions of weight 2k} with some {\it lattice functions of weight 2k}.

Then we know that some lattice functions, that are modular functions, can be identified with Eisenstein series, which are themselves convergent. Serre (1973) proves the following lemma: \\

\begin{lem}
Let $\Gamma$ be a lattice in $\mathbb{C}$. The series:
\[
 \sum'_{\gamma \in \Gamma} 1/|\gamma|^{\sigma}
 \]
 is convergent for $\sigma >2$.
 \end{lem}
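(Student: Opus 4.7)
The plan is to reduce the lattice sum to a comparison with a standard $p$-series by grouping nonzero lattice points according to a suitable norm. Fix an $\mathbb{R}$-basis $\{\omega_{1}, \omega_{2}\}$ of $\Gamma$ (guaranteed by condition 3 in Definition 3.1), so that every nonzero element of $\Gamma$ is uniquely written as $m\omega_{1} + n\omega_{2}$ with $(m,n) \in \mathbb{Z}^{2} \setminus \{(0,0)\}$.

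The first step is to show there exists a constant $c > 0$, depending only on $\Gamma$, such that
\[
|m\omega_{1} + n\omega_{2}| \ \geq \ c \cdot \max(|m|, |n|)
\]
for all $(m,n) \in \mathbb{Z}^{2}$. I would prove this by considering the continuous function $(x,y) \mapsto |x\omega_{1} + y\omega_{2}|$ on the compact set $K = \{(x,y) \in \mathbb{R}^{2} : \max(|x|,|y|) = 1\}$. Since $\omega_{1}$ and $\omega_{2}$ are $\mathbb{R}$-linearly independent (again by Definition 3.1), this function never vanishes on $K$, and thus attains a positive minimum $c > 0$. Homogeneity in $(x,y)$ then extends the bound to all of $\mathbb{Z}^{2}$.

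The second step is combinatorial. I would partition the nonzero lattice points by the shells
\[
S_{N} = \{(m,n) \in \mathbb{Z}^{2} : \max(|m|,|n|) = N\}, \qquad N \geq 1,
\]
and observe that $|S_{N}| = 8N$ (the boundary of a square of side $2N$). Combining this with the lower bound from step one,
\[
{\sum_{\gamma \in \Gamma}}' \ \frac{1}{|\gamma|^{\sigma}} \ = \ \sum_{N=1}^{\infty} \sum_{(m,n) \in S_{N}} \frac{1}{|m\omega_{1} + n\omega_{2}|^{\sigma}} \ \leq \ \sum_{N=1}^{\infty} \frac{8N}{(cN)^{\sigma}} \ = \ \frac{8}{c^{\sigma}} \sum_{N=1}^{\infty} \frac{1}{N^{\sigma - 1}}.
\]
The final series is a $p$-series with $p = \sigma - 1$, so it converges precisely when $\sigma > 2$.

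The only genuinely nontrivial step is the lower bound in step one; everything else is routine bookkeeping. The compactness-and-continuity argument there is standard, but it is the place where the geometric content of ``$\Gamma$ is a lattice'' (rather than merely a finitely generated subgroup) is actually used, via the $\mathbb{R}$-linear independence of the basis.
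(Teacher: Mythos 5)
Your proof is correct. Note that the paper itself supplies no argument for this lemma at all: it simply states it and attributes it to Serre (\emph{A Course in Arithmetic}, Ch.~VII), so there is no in-paper proof to compare against. Serre's own justification is a two-line sketch offering either a comparison with the integral $\iint (x^2+y^2)^{-\sigma/2}\,dx\,dy$ over the plane minus a disc, or the remark that the number of $\gamma \in \Gamma$ with $n \le |\gamma| < n+1$ is $O(n)$, reducing to $\sum n^{1-\sigma}$. Your argument is a fully worked-out variant of that second route: you replace the circular annuli by square shells $\max(|m|,|n|) = N$, which has the advantage that the count $|S_N| = 8N$ is exact rather than an unexplained $O(n)$, and you actually prove the comparability $|m\omega_1 + n\omega_2| \ge c\,\max(|m|,|n|)$ via compactness of the unit square boundary and $\mathbb{R}$-linear independence of $\omega_1, \omega_2$ --- which, as you say, is precisely where the lattice hypothesis enters and which Serre leaves implicit. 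All steps check: the shell cardinality $(2N+1)^2 - (2N-1)^2 = 8N$ is right, the homogeneity extension of the lower bound is right, and the resulting majorant $\frac{8}{c^{\sigma}}\sum_{N \ge 1} N^{1-\sigma}$ converges exactly when $\sigma > 2$. This is a complete and self-contained proof where the paper has only a citation.
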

 (The symbol $\sum'$ signifies that the summation runs over the nonzero elements of $\Gamma$.)

Now let $k$ be an integer >1. If $\Gamma$ is a lattice of $\mathbb{C}$, put:
\[
G_{k}(\Gamma) = \sum_{\gamma \in \Gamma}' 1/\gamma^{2k}.
\]
This series converges absolutely thanks to the preceding lemma. It proves the existence of a lattice function on the set $\mathcal{R}$ of lattices of $\mathbb{C}$. 

In other words, all the lattices of $\mathbb{C}$, which represent sets of flow (or tension) values in connected 1-graphs (or networks) are themselves connected by this lattice function.

Let now $k$ be an integer >1. Like all the Eisentein series of the type $G_{k}(z)$:

1)  $G_{k}(\Gamma)$, which is a modular form of weight $2k$, is holomorphic everywhere (including at the infinite);

2) $G_{k}(\infty)  = 2\zeta(2k)$;

3) $G_{k}$ has a limit for Im$(z) \rightarrow \infty, \ z$ being the value for which $\Gamma$ vanishes at one and only one point.

\section{Siegel space}

We can still extend the previous construction.

Let $N_{1}, N_{2}, ..., N_{m}$ be some finite connected 1-graphs and consider, for each of them, their associated matrices of flow (or tension) values. Let $Z_{1}, Z_{2}, ..., Z_{m}$ be such matrices with complex coefficients. 

Let $L$ be the set of all $n \times n$ complex symmetric matrices and $C_{n}$ the set of matrices $Z$ of $L$ such that 
 the hermitian matrix $I - Z\bar{Z}$ is strictly positive. 

Let now $S_{n}$ (the Siegel space) be the set of matrices $Z$ of $L$ whose imaginary part Im$\ Z = (1/ 2i) (Z - \bar{Z})$ is strictly positive. It is well known that the so-called "Cayley transformations" apply $C_{n}$ to $S_{n}$ and vice versa (see \cite{Deh}, 437-438). 

Hence, the real symplectic group Sp$(2n, \mathbb{R}$) plays the same role, with respect to the Siegel space $S_n$, than the group Sp(2, $\mathbb{R})$ = SL(2, $\mathbb{R}$) with respect to the upper half-plane of the complex plane. When the group SL(2, $\mathbb{R}$) operates in $\mathbb{C}$ by the Poincar\'{e} Fuchsian transformations, the group  Sp$(2n, \mathbb{R}$) now operates in the Siegel space $S_{n}$ by the transformations:

\begin{equation}
g' =
\begin{pmatrix}
A & B \\
C & D,
\end{pmatrix}
\in  \textnormal{Sp}(2n, \mathbb{R}).
\end{equation}

So we have:
\[
g'Z = (AZ + B) (CZ + D)^{-1}.
\]

Now let us call $\mathcal{R}'$ the set of all the matrix lattices of $C_{n}$, and let $M'$ be  the set of pairs ($A_{1}, A_{2}) \in C_{n}$, such that Im$(A_{1}, A_{2}^{-}) >0$, which supposes that $A_{2}$ is inversible.

To such a pair ($A_{1}, A_2)$, we associate now the lattice:
\[
\Gamma'(A_{1}, A_{2}) =  \mathbb{Z}A_{1} \oplus \mathbb{Z}A_{2}.
\]
with basis  $\{A_{1}, A_2\}$. Thus, we get a map $M' \rightarrow R'$, which is clearly surjective.

One gets the following theorem:\\

\begin{thm}
So that two elements of $M'$ define the same lattice, it is  necessary and sufficient that they are congruent modulo \textnormal{Sp(}$2n,\mathbb{R}$\textnormal{)}. 
\end{thm}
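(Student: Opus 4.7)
The plan is to transport Serre's argument for Theorem 4.1 to the Siegel setting, substituting the matrix quotient $W = A_1 A_2^{-1} \in S_n$ for the scalar ratio $z = \alpha_1/\alpha_2$, and the symplectic action $W \mapsto (AW+B)(CW+D)^{-1}$ for the M\"obius action of $\mathrm{SL}(2,\mathbb{Z})$ on the Poincar\'e upper half-plane.

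For sufficiency, I would start from a pair $(A_1, A_2) \in M'$ and a symplectic matrix $g'$ with blocks $A, B, C, D$, and set $A'_1 = AA_1 + BA_2$, $A'_2 = CA_1 + DA_2$. Two verifications are then required. First, that $\{A'_1, A'_2\}$ is a basis of the module $\mathbb{Z}A_1 \oplus \mathbb{Z}A_2$, so that $\Gamma'(A'_1, A'_2) = \Gamma'(A_1, A_2)$; this reduces to the invertibility of the block matrix $g'$ over the appropriate coefficient ring. Second, that $\mathrm{Im}(A'_1 (A'_2)^{-1}) > 0$: this follows from the standard fact that the symplectic action sends the Siegel upper half-space $S_n$ into itself, which is precisely the transformation rule $(4)$ recalled in the paper.

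For necessity, suppose $(A_1, A_2)$ and $(A'_1, A'_2)$ determine the same lattice $\Gamma' \in \mathcal{R}'$. A change-of-basis matrix $g$ then carries the first pair to the second, and the task is to show $g \in \mathrm{Sp}(2n,\mathbb{R})$. Writing $W = A_1 A_2^{-1}$ and $W' = A'_1 (A'_2)^{-1}$, a direct block computation yields $W' = (AW+B)(CW+D)^{-1}$. The identity
\[
\mathrm{Im}(W') = ((CW+D)^{-1})^{*}\, \mathrm{Im}(W)\, (CW+D)^{-1}
\]
holds exactly when $g$ preserves the standard symplectic form, so demanding that both $W$ and $W'$ belong to $S_n$ should force $g \in \mathrm{Sp}(2n,\mathbb{R})$, in the same manner that preservation of the positivity of imaginary parts forced $\det g = +1$ in Serre's scalar argument.

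The main obstacle I anticipate is precisely this last step. In the scalar case, $g$ is a priori an integral matrix with $\det g = \pm 1$, and positivity of $\mathrm{Im}(z')$ merely discriminates between the two signs. Here, one must extract the full symplectic relation $g^{\top} J g = J$ from a positivity condition on hermitian $n \times n$ matrices, which couples all four blocks $A, B, C, D$ simultaneously. Producing this stronger constraint — rather than a weaker one such as $\det g = 1$ — is where the proof substantively exceeds a mechanical translation of the scalar case, and is where I would expect to spend the bulk of the technical work.
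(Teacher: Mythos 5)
Your plan follows the paper's proof step for step: the same substitution of $W=A_1A_2^{-1}$ for $z=\alpha_1/\alpha_2$, the same sufficiency computation via $A'_1=AA_1+BA_2$, $A'_2=CA_1+DA_2$, and the same appeal to positivity of imaginary parts in the necessity direction. So in terms of strategy there is nothing to separate your attempt from the paper's own argument.

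The gap you flag at the end, however, is genuine and remains open in your proposal: in the necessity direction you must show that a change-of-basis matrix $g$ carrying one admissible pair to the other actually satisfies $g^{\top}Jg=J$, and a positivity condition on $\mathrm{Im}(W)$ and $\mathrm{Im}(W')$ does not obviously yield all $n(2n+1)$ scalar relations that define $\mathrm{Sp}(2n,\mathbb{R})$ inside $\mathrm{GL}(2n,\mathbb{R})$. The identity $\mathrm{Im}(W')=((CW+D)^{-1})^{*}\,\mathrm{Im}(W)\,(CW+D)^{-1}$ that you invoke is a \emph{consequence} of $g$ being symplectic, so using it to prove that $g$ is symplectic is circular unless you first derive it from weaker hypotheses. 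You should also confront a structural issue in the sufficiency direction that you pass over with the phrase ``invertibility over the appropriate coefficient ring'': the lattice $\Gamma'(A_1,A_2)=\mathbb{Z}A_1\oplus\mathbb{Z}A_2$ is a rank-two $\mathbb{Z}$-module, whereas $A'_1=AA_1+BA_2$ is formed with $n\times n$ \emph{matrix} coefficients, so $A'_1$ need not even lie in $\mathbb{Z}A_1\oplus\mathbb{Z}A_2$, let alone form part of a $\mathbb{Z}$-basis of it; the group that permutes bases of such a module is $\mathrm{GL}(2,\mathbb{Z})$, not $\mathrm{Sp}(2n,\mathbb{R})$. Neither difficulty is resolved in the paper's own proof, which asserts a change-of-basis matrix ``of determinant $>0$'' and concludes only that $\det g'>0$ --- a much weaker statement than symplecticity --- so you should not expect to close these gaps by consulting it; closing them (or reformulating the statement so that they disappear) is the real content of the theorem.
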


\begin{proof}
The condition is sufficient. Let $A_{1}, A_{2} \in M'$. Then, put :
\[
A'_{1} = a A_{1} +  b A_{2} \ \textnormal{and} \ A'_{2} = c A_{1} + d A_{2}.
\]
Il is clear that $\{A'_{1}, A'_{2}\}$ is a basis of $\Gamma(A_{1}, A_{2})$. Moreover, if $Z = A_{1}A_{2}^{-}$ and $Z' = A'_{1} A'^{-}_{2}$,
\[
Z' = (AZ  + B)(CZ+D)^- = g'Z.
\]
This shows that Im$(Z')>0$, hence that $(A'_{1}, A'_{2})$ belongs to $M'$.

Conversely, if $(A_{1}, A_{2})$ and $(A'_{1}, A'_{2})$ are two elements of $M'$ which define the same lattice, there exists an integer matrix 
\[
g' =
\begin{pmatrix}
A & B \\
C & D
\end{pmatrix}
\]
of determinant >0 which transforms the first basis into the second. If det($g'$) was <0, the sign of Im$(A'_{1} A'^{-}_{2})$ would be the opposite of Im$(A_{1} A^{-}_{2})$ as one sees by an immediate computation. The two signs being the same, we have necessarily det($g'$) >0, which proves the theorem.

Thus, we can identify the set  $M'$ of all the lattice matrices of $C_{n}$ with the quotient of $S_{n}$ by the action of  Sp(2$n,\mathbb{R}$).
\end{proof}

For the same reasons, we can also define, as previously, a lattice function of weight $2k$.

Let $F'$ be such a function. If $(A_{1}, A_{2}) \in M'$, we denote by $F'(A_{1}, A_{2})$ the value of $F'$ on the lattice $\Gamma'(A_{1}, A_{2})$. The formula (2) translates to:

\begin{equation}
F'(\lambda A_{1}, \lambda A_{2}) = \lambda^{-2k}F'(A_{1}, A_{2}).
\end{equation}                

Writing now that $F'$ is invariant by Sp($2n, \mathbb{R})$, we can see that this function satisfies the identity:

\begin{equation}
F'(Z) = (XZ+D)^{-2k}f(\frac{AZ+B}{CZ+D}),
\end{equation}

for all:
\[
\begin{pmatrix}
A & B \\
C & D
\end{pmatrix}
\in \textnormal{Sp(}2n,\mathbb{R}).
\]

As previously, this function can be identified with an Eisenstein series $G'_{k}(\Gamma')$ on the set $\mathcal{R}'$ of the matrix lattices of $C_{n}$, which is absolutely convergent.

In other words, all the $n \times n$ matrix lattices of $C_{n}$, which represent sets of subsets of flows (or tensions) in connected 1-graphs (or networks), are linked by this lattice function.

 If we associate networks with subsets of flow (or tension) values, this proves the existence of a "network of networks".

\section{The tree of minimal length}

Let $G_{k}(\Gamma)$ be the graph associated with the set of all subsets of flows and $A_{0}$, the minimal tree of $G_{k}(\Gamma)$. If $U$ is the set of arcs of $G_{k}(\Gamma), \ U - A_{0} = A'_{0}$ is the maximal cotree of $G_{k}(\Gamma)$\footnote{On trees and co-trees, see (\cite{Gon}, 103-128.}. Now, it is easy to see that :

(1) The smallest arc of all cocycles (tensions)  is in $A_{0}$;

(2) The greatest arc of all cycles (flows) is in $U - A_{0} = A'_{0}$.

We finally obtain a set of arcs without a maximal cycle and we can always find an optimal flow in the graph because any flow does not circulate in all the arcs of the whole graph but only in those of a tree whose capacities, which do not admit a higher bound, are infinite.

Let us now precise the form of the tree of minimal length $A_{0}$. Let $n$ be the number of vertices of $A_{0}, \ A$ the set of its arcs, $a$ an arc of $A, \ d$ the distance between two vertices $s$ and $s'$. We have:

(1) Card($A) = n(n-1)/2$;

(2) $a = \{s,s'\} :  d(a) = d(s,s')$.

Now let $P$ be a polygon, i.e. a set of edges which is a subset of $A_{0}$. The support of $P$ will be the union of the edges of $A_{0}$, that is, the set of vertices of $G_{k}(\Gamma)$ which are ends of at least one edge of $P$. One can speak of polygon $P$ on $G_{k}(\Gamma)$ (resp. in $G_{k}(\Gamma)$) according to whether the support of $P$ is $G_{k}(\Gamma)$ or a subset of $G_{k}(\Gamma)$ distinct from itself.

$A_{0}$, which is the set of all possible edges on $G_{k}(\Gamma)$, is a complete polygon of $G_{k}(\Gamma)$.

A graph being the conjunction of a polygon and its support, a chain $C$ will be a polygon in $G_{k}(\Gamma)$ whose vertices that form its support can be ordered in a sequence ($s_{0}, s_{1}, ... s_{p}$). We have:

(1) For every $i \in \ ]P[, \{s_{i-1}, s_{i}\} \in C$;

(2) For every $i, j \in [P], i \neq j \Rightarrow s_{i} \neq s_{j}.$

A cycle is a chain where condition (2) holds for all the points of its support except $s_{0}$ and $s_{p}$ which are merged (the ends of $C$).

To exhibit $G_{k}(\Gamma)$, we need the following complementary considerations:

A) A tree is a connected polygon that does not contain a loop.

B) The length of a polygon is the sum of the lengths of all its edges.

C) The width of a polygon is the length of its longest edge.

Suppose that the polygon reduced to the edge $(s, s')$ represents the chain of $A$ with minimum width joining $s$ to $s'$, then $\{s, s'\}$ is an element of the tree of minimal length $T$ on $G_{k}(\Gamma)$ and there exists at least one such edge on $A$, the edge of minimal length.

Conversely, if $\{r, s\}$ is an element of the tree $T$ on $G_{k}(\Gamma)$, then $\{r, s\}$ is the chain of $A$ having the smallest width and joining $r$ to $s$.

In this context, $G_{k}(\Gamma)$ can be identified with a classification of classifications. This would amount to doing a factor analysis on all parts of the representative tree. Such a classification would correspond to all the axes of a factor analysis, with an original calculation on the first axis.

\section{Construction of $G_{k}(\Gamma)$}

In order to construct $G_{k}(\Gamma)$, we must first look at the lattice $\Gamma = \mathbb{Z}\alpha_{1} \oplus \mathbb{Z}\alpha_{2}$, which makes possible to distinguish a lattice function and a modular function. It must be assumed that the minimal bases of this lattice suppose a matroid $M$. If $B$ is the set of these bases, then $C$, the set of cycles of $M$ (resp. $D$ the set of cocycles of $M$) is the set of subsets which are not included in any basis (resp. which have a non-empty intersection with any basis) and minimal for inclusion with this property.

Let now $B \in \mathcal{B},\ b \in B,\ c \in X - B$. Let $D (b, B)$ be the unique cocycle satisfying $B \cap D(b, B) = \{b\}$ and $C(c, B)$, the unique cycle satisfying $C(c, B) - B = \{c\}$. We then have:
\[
B \in C (c, B) \iff c \in D (b, B) \iff  B - {b} \cup {c} \in \mathcal{B}.
\]

$C$ and $D$ are the sets of cycles and minimal cocycles for the inclusion of the graph.

The minimum tree of a graph is the set of minimum edges of a cocycle, its complement being the set of the maximum edges of a cycle.

If we consider the lattice $(F, \cup, \cap$), a sublattice of $M$, the algebraic properties of $F$ (distributivity) are stronger than those of $M$ (semi-modularity). It is thus possible to construct $G_{k}(\Gamma)$, the super-lattice, by defining it as the set of distributive sub-lattices of any geometric lattice, that is to say, a sub-lattice of the semi-geometric lattice associated with $M$.

\section{Possible applications}

Let's finish with some more epistemological considerations: after all, mathematical physics and philosophy are not so far apart (see \cite{Par4}).

The space associated with this "network of networks", that is, the zeros and poles of the modular function of all networks, has been studied in hard proof theorems, because one does not define a structure of complex analytic variety on the single compactified network. (A natural way of proceeding would be to define a compactified isomorphism on the Riemann sphere $S = \mathbb{C} \cup \{\infty\}.)$

Whatever the difficulties of study, it is proved that this network function exists, and we have thus proved also that the set of all sets of possible flows exists as a modular function of all networks in the algebraic sense of the term.

Let us now consider some possible applications of the previous formalism.

1. Since the old work of [Von Neuman 1946], quantum mechanics represents all the physical states of the universe by a vector space of infinite dimension called "Hilbert space". However, the separability property and the convergence condition make it possible to reduce to closed subspaces. In this case, the complex vectors form a finite dimensional subspace and their mathematics is identical to that of flows or tensions on a graph, except that their coefficients can take on complex values. This situation makes it possible, as we have seen, to apply known theorems of arithmetic to them.

2. Because of the flow-tension duality, the network function defines as well the set of all the sets of possible tensions, and hence it specifies the shortest path in the total set of all possible paths, as well as the most rational scheduling of tasks in the set of all possible actions. Here we have a theorem of the existence of an optimal behavior, whatever the field we consider.

3. Moreover, the problem of the shortest path in a graph is related to the question of the tree of minimum length, which itself formalizes the notion of classification. A "network of networks" with a maximum voltage would thus make it possible both: to confirm the existence of a tree of minimum length of the network of all networks, and hence, of a classification of classifications (see \cite{Par3}).

4. In general, the variable "weights" can receive different meanings (reliability, economy, etc.) on a tree, other than the length of the arcs. So the network of networks $G_{k}(\Gamma)$ can still make it possible to calculate the maximum reliability path, or the most economical route, etc., in the set of all possible paths.

5. I will say a final word about the aim of this construction : though the world may be multiple and chaotic, circulations and actions can be ordered in relation to the same structure, which is expressed - in the linear case - through the form of this remarkable holomorphic function which has been here constructed. Doing that, we tried in fact to formalize the intuition of a "network of networks", as it is expressed in the conclusion of our book on networks (see \cite{Par1}, 265-286). This is also the achievement of what we called elsewhere a "rationalit\'{e} r\'{e}ticulaire" ({\it reticular rationality})(see \cite{Par2}).

\end{document}